  \providecommand\BibTeX{{%
    \normalfont B\kern-0.5em{\scshape i\kern-0.25em b}\kern-0.8em\TeX}}}
\newcommand{\ra}{\rightarrow}
\newcommand{\la}{\leftarrow}
\newcommand{\bern}{\mathrm{Bern}}
\newcommand{\rank}{\mathrm{rank}}
\newtheorem{assumption}{Assumption}
\begin{document}

\title{An IPW-based Unbiased Ranking Metric in Two-sided Markets}

\author{Keisho Oh}
\affiliation{%
  \institution{Recruit Co., Ltd.}
  \city{Tokyo}
  \country{Japan}}
\email{oh_keisho@r.recruit.co.jp}

\author{Naoki Nishimura}
\affiliation{%
  \institution{Recruit Co., Ltd.}
  \city{Tokyo}
  \country{Japan}}
\email{nishimura@r.recruit.co.jp}

\author{Minje Sung}
\affiliation{%
  \institution{Tokyo Institute of Technology}
  \city{Tokyo}
  \country{Japan}}
\email{sung.m.aa@m.titech.ac.jp}

\author{Ken Koboyashi}
\affiliation{%
  \institution{Tokyo Institute of Technology}
  \city{Tokyo}
  \country{Japan}}
\email{kobayashi.k.ar@m.titech.ac.jp}

\author{Kazuhide Nakata}
\affiliation{%
  \institution{Tokyo Institute of Technology}
  \city{Tokyo}
  \country{Japan}}
\email{nakata.k.ac@m.titech.ac.jp}


\begin{abstract}
In modern recommendation systems, unbiased learning-to-rank (LTR) is crucial for prioritizing items from biased implicit user feedback, such as click data. 
Several techniques, such as Inverse Propensity Weighting (IPW), have been proposed for single-sided markets.
However, less attention has been paid to two-sided markets, such as job platforms or dating services, where successful conversions require matching preferences from both users.
This paper addresses the complex interaction of biases between users in two-sided markets and proposes a tailored LTR approach. 
We first present a formulation of feedback mechanisms in two-sided matching platforms and point out that their implicit feedback may include position bias from both user groups. 
On the basis of this observation, we extend the IPW estimator and propose a new estimator, named \emph{two-sided IPW}, to address the position bases in two-sided markets. 
We prove that the proposed estimator satisfies the unbiasedness for the ground-truth ranking metric. 
We conducted numerical experiments on real-world two-sided platforms and demonstrated the effectiveness of our proposed method in terms of both precision and robustness. 
Our experiments showed that our method outperformed baselines especially when handling rare items, which are less frequently observed in the training data.

\end{abstract}



\keywords{learning-to-rank, unbiased learning, recommender system}



\maketitle
\section{Introduction}
Information retrieval and recommendation systems have become increasingly prevalent. 
One of the key features of these systems is that they present users with items ranked in accordance with their preferences.
To this end, such systems often use learning-to-rank (LTR) algorithms that capture users' preferences, prioritizing items that are more likely to be relevant or appealing to the user. 
As the number of available items continues to grow, LTR has become an essential component of effective recommendation systems, attracting considerable interest from both industry and academic fields.

One of the key challenges of LTR is \emph{unbiased learning}, which aims to accurately estimate users' preferences without being influenced by biases in training data. 
Usually, LTR methods rely on users' implicit feedback, such as clicks, because this type of data can be easily collected during users' natural behaviors in the system.
However, such implicit feedback sometimes reflects not only users' true preferences but also other factors like their first impressions of items or the item's position in the list~\cite{Wen2019}. 
As a result, learning users' preferences from implicit feedback can be biased and vulnerable, which leads to suboptimal recommendations and a degraded user experience.
To address this issue, the unbiased LTR framework has attracted much attention~\cite {Chen2023}.  
This research topic was first introduced by \citet{Joachims2017}, and several unbiased LTR methods have been proposed~\cite{Wang2016,Ai2018,wang2018pbm,hu2019unbiased_lambdamart} to improve the quality of recommendation systems. 

While several unbiased learning techniques have been proposed in the literature on LTR, existing methods mainly focus on single-sided markets, and less attention has been paid to two-sided markets.
In two-sided markets, the final conversion is determined only when preferences from both users match.   
In such a situation, position biases are included in the implicit feedback not only from one-side users but also from the opposite users.  
Let us consider the case of a two-sided job-matching platform for example. 
A job seeker first inputs a query to search for job postings matching his/her preference.  
Then, the job seeker browses the list of job postings as a result and clicks on several of them to confirm the more detailed information or submit an application~\cite{Chen2019}. 
On the other hand, a recruiter browses a list of applicants to choose some promising ones. 
Finally, the recruiter makes a final decision on which applicant to hire. 
In such a scenario, both the job seeker and recruiter tend to recognize items presented at higher positions. 
As a result, there exists not only a position bias against job seekers but also one that recruiters are more likely to hire the candidates who are displayed at the top of the list.  

Against this background, we propose an unbiased LTR framework for two-sided matching platforms. 
In this work, we first present an estimator called the two-sided inverse probability weighting (IPW) estimator by extending the (IPW) estimator~\citep{Joachims2017}. 
We prove that the proposed metric is unbiased where biased is included in implicit feedback from the two sides in matching platforms. 
Also, we  present a practical LTR algorithm with the two-sided IPW estimator. 
To the best of our knowledge, we are the first to propose an unbiased LTR method in two-sided matching platforms.  

The contributions of our study can be summarized as follows:
\begin{itemize}
    \item We are the first to give a formulation of feedback mechanisms in two-sided matching platforms and point out that their implicit feedback may include position bias from both user groups.
    \item To address these biases, we propose a new estimator, named \emph{two-sided IPW} by extending the IPW estimator. 
    \item We also proved that the proposed estimator is unbiased against position biases in two-sided markets. 
    \item  We conducted numerical experiments on both synthetic and real datasets and demonstrated that the LTR method with our estimator significantly outperformed existing methods. 
\end{itemize}

\section{Related Work}
\subsection{Learning-to-Rank}
LTR refers to a framework for constructing ranking models where items are sorted in accordance with their degrees of relevance or preferences of users. 
LTR was first introduced by \citet{Fuhr1989} and due to its practical importance, this topic has attracted much attention.  
The approaches of existing LTR methods are categorized into three types: pointwise, pairwise, and listwise approaches. 
The pointwise approaches~\citep{Crammer2001,Cooper1992,Li2007,Nallapati2004,Chu2007,Cossock2006} focus on determining whether a single item is relevant or not by itself. 
They train a classification or regression model to predict whether the individual item is relevant to the input query.
On the other hand, the pairwise approaches~\citep{burges2010ranknet,Freund2003,Joachims2006,Xu2007,Shen2005} use pairs of items as a training instance and train a ranking function from the relative order of relevance between the items in each pair. 
The listwise methods~\citep{cao2007listnet,Guiver2008,Qin2009,Taylor2008,Volkovs2009,Wu2009,Xia2008,Xu2007,Yue2007,burges2010ranknet,bruch2021xendcg} aim to train a ranking function where the ranked lists of items are given. 
Contrary to the pointwise and pairwise approaches, the listwise approaches can capture the nature of the ranking problems, and the previous studies showed that the listwise approaches delivered better performance than the others~\citep{cao2007listnet,Qin2008,Qin2009}.

\subsection{Unbiased Learning-to-rank Algorithms}
The Unbiased LTR framework has attracted much attention as an effective approach to train a ranking model from biased feedback for position bias~\cite {Chen2023}.  
\citet{Joachims2017} first dealt with the unbiased ranking problems and presented the counterfactual framework to learn an unbiased ranking function by using the IPW estimator. 
For the propensity weights, intervention-based and result randomization-based estimation methods have been proposed~\cite{Joachims2017,Wang2016}. 
\citet{Ai2018} devised a method called DLA that simultaneously trains a ranking function and estimates propensity weights. 
\citet{wang2018pbm} proposed a method for position bias estimation with a regression-based expectation-maximization algorithm. 
Furthermore, \citet{hu2019unbiased_lambdamart} proposed a listwise unbiased LTR method called LambdaMart.

As previously mentioned, several methods have been proposed for training an unbiased ranking function.
However, these existing studies focus on cases where the bias is derived from the implicit feedback from only one-side users. 
In contrast, our proposed method addresses the cases where the implicit feedback includes not only the bias derived from one side but also from the opposite side.

\subsection{Recommendation in Matching Platforms}
As matching platforms have been emerging in various fields such as job recruiting and  dating,  two-sided recommendation methods have begun to be actively studied. 
\citet{yang2022jobmatch} used Bayesian personalized ranking (BPR) loss function to predict the two-sided preferences taking into account the expectation from both sides. 
\citet{Tu2014} proposed a Latent dirichlet allocation (LDA) probabilistic model to learn the preferences in two-sided dating platforms. 
\citet{su2022opt_rank} focused on the ranking problem in two-sided matching and proposed an optimization problem to maximize the overall social welfare. 
In addition, \citet{saito2022fair} defined a ranking optimization problem to maximize the Nash Social Welfare and provided its tractable formulation. 
However, these existing studies do not address the issue that the implicit feedback includes bias from both sides and thus, these methods are expected not to perform well under a biased dataset.

\section{Proposed Method}
We first present a formulation of feedback mechanisms in matching platforms and point out it potentially has two types of position bias.
Then, we address how to debias these two types of position biases in ranking metrics. 
Finally, we derive an unbiased LTR method for two-sided matching platforms.

\subsection{Notation and Problem Setting}
We consider a platform that has two sets of users, a proactive set $U$ and a reactive set $V$.
We assume two types of users that interact with each other as follows: the platform first presents each proactive user $u \in U$ with a personalized ranking list $V_u \subseteq V$ that includes reactive users who the user $u$ might be interested in.
In accordance with the ranking list, each proactive user $u$ selects specific reactive users on the basis of his/her preferences.
Next, for each reactive side user $v \in V$, the platform shows a new ranking list $U_v\subseteq U$ including proactive users who have already selected $v$. 
We define $\rank_{u}(v)$ as the ranking of user $v$ in the list shown to $u$ and $\rank_{v}(u)$ as the ranking of user $u$ conversely. 

To express implicit interactions from the proactive user $u\in U$ to the reactive user $v\in V$, we use a binary random variable $Y_{u \ra v}$. 
If $Y_{u \ra v}=1$, the proactive user $u$ gives positive feedback to the reactive user $v$; otherwise $Y_{u \ra v}=0$. 
We also introduce a binary random variable $Y_{u \la v}$ to express the implicit feedback from the reactive user $v$ to the proactive user $u$. 
To denote the relevance from the proactive user $u$ to the reactive user $v$, we use a binary label $R_{u \ra v}\in \{0,1\}$. 
Conversely, $R_{u \la v}$ represents the relevance from $v$ to $u$. 
If $R_{u \ra v}=1$ (resp. $R_{u \la v}=1$), the reactive user $v$ (resp. the proactive user $u$) is relevant to the proactive user $u$ (resp. the reactive user $v$). 
We denote the exposure variable as $O_{u \ra v}$ (resp. $O_{u \la v}$) whether the reactive user $v$ (resp. the proactive user $u$) is exposed to the proactive user $u$ (resp. the reactive user $v$).  
Let us denote $\theta_{u \ra v}, \theta_{u \la v}$ the probability of $O_{u \ra v}, O_{u \la v}$ get $1$ respectively. 
We summarize the notation in \Cref{table:notation}. 

\begin{table}[ht]
  \caption{Summary of notation}
  \label{table:notation}
  \centering
  \begin{tabular}{cl}
    \hline
    Notation  & Explanation \\
    \hline \hline
    $u, u_{i}$ & proactive user \\
    $v, v_{j}$ & reactive user \\
    $U$ & set of proactive users \\
    $V$ & set of reactive users \\
    $U_v$ & set of proactive users shown to $v$ \\
    $V_u$ & set of reactive users shown to $u$ \\
    $\rank_{u}(v)$ & the rank of $v$ at the list shown to $u$ \\
    $Y_{u \ra v}$ / $Y_{u \la v}$ & proactive/reactive user's feedback \\
    $R_{u \ra v}$ / $R_{u \la v}$ & true proactive/reactive side-specific relevance\\
    $O_{u \ra v}$ / $O_{u \la v}$ & proactive/reactive exposure indicator\\
    $\theta_{u \ra v}$ / $\theta_{u \la v}$ & proactive/reactive exposure probability\\
    \hline
  \end{tabular}
\end{table}

In this study, we rely on the following two technical assumptions, which are often used in causal inference:
\begin{assumption}
    For all proactive user $u\in V$, the exposure probability $\theta_{u\ra v}$ only depends on the position of $v\in V$ in the list, i.e., $O_{u \ra v} \sim P(\cdot \mid \rank_{u}(v))$. 
\end{assumption}
\begin{assumption}
For any proactive-reactive user pair $(u,v)\in U\times V$, $\theta_{u\ra v},  \theta_{u\la v}\in (0,1]$.
\end{assumption}

We can now model implicit feedback mechanisms in the two-sided matching platform. 
For proactive users, $u \in U$ selects $v \in V$ only if $u$ examined $v$ ($O_{u\ra v}=1$) and found it relevant ($R_{u\ra v}=1$). 
For reactive users, $v\in V$ selects $u\in U$ only if the following conditions are all satisfied:
\begin{enumerate}
    \item The proactive user $u$ has already selected $v$ ~($Y_{u\ra v}=1$);
    \item The proactive user $u$ has been exposed to $u$~($O_{u\la v}=1$);
    \item The proactive user $u$ is relevant to $v$~~($R_{u\ra v}=1$). 
\end{enumerate} 
In this setting, the implicit feedback model can be expressed as follows: 
\begin{align}
  Y_{u \ra v} &= O_{u \ra v} \cdot R_{u \ra v}, \label{feedback:reactive} \\
  Y_{u \la v} &= Y_{u \ra v} \cdot O_{u \la v} \cdot R_{u \la v}\label{feedback:proactive}.
\end{align}

\subsection{Ranking Metric and Naive Estimator}
To evaluate the ranking $\rank_{u}(v)$ in the two-sided matching platform, we define a \emph{two-sided relevance} as $R_{uv} \coloneq R_{u \ra v}(1 + R_{u \la v})$. 
The two-sided relevance $R_{uv}$ represents the following three scenarios:
\begin{align*}
    R_{uv} = \begin{cases}
        2&\text{if~$R_{u \ra v}=1$ and~$R_{u \la v}=1$},\\
        1&\text{if~$R_{u \ra v}=1$ and~$R_{u \la v}=0$},\\\
        0&\text{otherwise}.
    \end{cases}
\end{align*}
Then, we can define the ground-truth performance of a given  ranking $ \rank_{u}(v)$  as follows:
\begin{align}
\label{metric:ground_truth}
&\mathcal{R}\left(\rank_{u}(v), R_{u\ra v}, R_{u \la v}\right) \nonumber \\
&\quad \coloneq \frac{1}{\left| U \right|}\sum_{u \in U} \sum_{v \in V_u} \lambda(\rank_{u}(v)) \cdot g(R_{u \ra v}, R_{u \la v}). 
\end{align}
Here, $g(R_{u \ra v}, R_{u \la v})$ is the gain function defined as
$$
g(R_{u \ra v}, R_{u \la v})\coloneq  2^{R_{uv}} - 1 =2^{R_{u \ra v}(1 + R_{u \la v})}-1,
$$
and $\lambda\left(\rank_{u}(v)\right)$ is a weight function determining the ranking metric. 
For instance, when we set 
$$\lambda\left(\rank_{u}(v)\right) = \frac{\mathbf{1}(\rank_{u}(v) \le K)}{\log_2\left(\rank_{u}(v) + 1\right)},$$ 
where $\mathbf 1(\cdot)$ is the indicator function and $K$ is a user-defined positive integer, \Cref{metric:ground_truth} corresponds to  $\mathrm{DCG}@K$.

Since we cannot directly observe the relevance $R_{u\ra v}$ and $R_{u\la v}$, we consider designing an estimator for the ground-truth metric with the observable implicit feedback $Y_{u\ra v}$ and $Y_{u\la v}$.  
Let us define a surrogate gain function with $Y_{u\ra v}$ and $Y_{u\la v}$ as
\begin{align}\label{eq:surrogate_gain}
h(Y_{u \ra v}, Y_{u \la v}) &:= 2^{Y_{u \ra v} + Y_{u \la v}} - 1. 
\end{align}
We note that from  \Cref{feedback:reactive,feedback:proactive}, the following holds
\begin{align*}
    h(Y_{u \ra v}, Y_{u \la v}) &= 2^{O_{u\ra v} R_{u\ra v}(1+ O_{u\la v} R_{u\la v})} \\
    &= g(O_{u\ra v} R_{u\ra v}, O_{u\la v} R_{u\la v}). 
\end{align*}
Thus, we can regard $h(Y_{u \ra v}, Y_{u \la v})$ as an alternative to the gain function $g$ based on the observable implicit feedback $Y_{u\ra v}$ and $Y_{u\la v}$.

Using this surrogate gain function $h$, we can define a naive estimator of \Cref{metric:ground_truth} as follows:
\begin{align}
\label{metric:naive_estimator}
&\hat{\mathcal{R}}\left(\rank_{u}(v), Y_{u\ra v}, Y_{u \la v}\right)\notag \\
&\quad \coloneq \frac{1}{\left| U \right|}\sum_{u \in U} \sum_{v \in V_u} \lambda(\rank_{u}(v)) \cdot h(Y_{u \ra v}, Y_{u \la v}).
\end{align}
However, the naive estimator $\hat{\mathcal{R}}$ is biased as shown below:
\begin{proposition} 
Suppose that $O_{u \ra v}$ and $O_{u \la v}$ are independent of each other and $\theta_{u\ra v}, \theta_{u\la v}\neq 1 $. 
Then, the following holds
\begin{align*}
&\mathbb{E}_{O_{u \ra v}, O_{u \la v}} \left[ \hat{\mathcal{R}}\left(\rank_{u}(v), Y_{u \ra v}, Y_{u \la v}\right) \right]\\ 
&\quad \not\propto  \mathcal{R}\left(\rank_{u}(v), R_{u\ra v}, R_{u \la v}\right).
\end{align*}
\end{proposition}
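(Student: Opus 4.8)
The plan is to push the expectation inside the double sum of \Cref{metric:naive_estimator} by linearity, which reduces the whole statement to understanding the per-pair quantity $\mathbb{E}_{O_{u \ra v}, O_{u \la v}}\!\left[h(Y_{u \ra v}, Y_{u \la v})\right]$ and comparing it with $g(R_{u \ra v}, R_{u \la v})$. Using the feedback model \Cref{feedback:reactive,feedback:proactive}, the identity $Y_{u \ra v} + Y_{u \la v} = O_{u \ra v}R_{u \ra v}\bigl(1 + O_{u \la v}R_{u \la v}\bigr)$ already recorded just after \Cref{eq:surrogate_gain}, the assumed independence of $O_{u \ra v}$ and $O_{u \la v}$, and the marginals $O_{u \ra v}\sim\bern(\theta_{u \ra v})$, $O_{u \la v}\sim\bern(\theta_{u \la v})$, this expectation is merely a sum over the four joint outcomes of $(O_{u \ra v}, O_{u \la v})$, so only a short computation is involved.

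First I would run the case analysis on the fixed labels $(R_{u \ra v}, R_{u \la v})$. If $R_{u \ra v}=0$, then $h$ and $g$ both vanish. If $R_{u \ra v}=1$ and $R_{u \la v}=0$, then $Y_{u \la v}=0$ and $h = 2^{O_{u \ra v}}-1$, so $\mathbb{E}_{O}[h] = \theta_{u \ra v}$ while $g = 1$. If $R_{u \ra v}=R_{u \la v}=1$, then $h$ takes the values $0$, $1$, $3$ according to whether $O_{u \ra v}=0$, $(O_{u \ra v},O_{u \la v})=(1,0)$, or $(O_{u \ra v},O_{u \la v})=(1,1)$, so $\mathbb{E}_{O}[h] = \theta_{u \ra v}(1 + 2\theta_{u \la v})$ while $g = 3$. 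Substituting back, $\mathbb{E}_{O}[\hat{\mathcal{R}}]$ is the same $\lambda$-weighted sum as $\mathcal{R}$, but with each $g(R_{u \ra v}, R_{u \la v})$ replaced by $\mathbb{E}_{O}[h(Y_{u \ra v}, Y_{u \la v})]$.

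To conclude, I would compare the implied per-pair correction factor $\mathbb{E}_{O}[h]/g$ across relevance types: it equals $\theta_{u \ra v}$ on a pair with $(R_{u \ra v},R_{u \la v})=(1,0)$ but $\theta_{u \ra v}(1+2\theta_{u \la v})/3$ on a pair with $(R_{u \ra v},R_{u \la v})=(1,1)$, and these two coincide only when $\theta_{u \la v}=1$, which is excluded by the hypothesis $\theta_{u \la v}\neq 1$ together with the standing assumption $\theta_{u \la v}\in(0,1]$. Hence there is no constant $c$, independent of the relevance labels, with $\mathbb{E}_{O}[\hat{\mathcal{R}}] = c\,\mathcal{R}$ for every relevance profile; I would make this rigorous by exhibiting a minimal instance, e.g.\ two proactive users whose ranking lists place a single reactive user at the same position, one pair of type $(1,0)$ and the other of type $(1,1)$ with all remaining labels zero, for which the claimed proportionality would force two different values of $c$.

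The main obstacle is precisely this last point rather than any calculation: ``biased'' in the statement means ``not proportional,'' not merely ``not equal,'' so it is not enough to note $\mathbb{E}_{O}[h]\neq g$; one must rule out every rescaling, which is why the argument isolates how $\mathbb{E}_{O}[h]/g$ depends on the relevance labels and then pins it down with an explicit counterexample. The linearity reduction and the four-outcome case analysis are routine.
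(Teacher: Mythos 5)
Your proposal is correct and follows the same basic strategy as the paper: reduce, by linearity, to the per-pair comparison of $\mathbb{E}_{O_{u \ra v}, O_{u \la v}}[h(Y_{u \ra v}, Y_{u \la v})]$ with $g(R_{u \ra v}, R_{u \la v})$, then compute that expectation under independence. The paper does the computation via the algebraic identity $2^{Y_{u \ra v}}2^{Y_{u \la v}} = (1-O_{u \ra v}) + O_{u \ra v}(1-O_{u \la v})2^{R_{u \ra v}} + O_{u \ra v}O_{u \la v}2^{R_{u \ra v}(1+R_{u \la v})}$ and then simply asserts $\not\propto$; you instead enumerate the three relevance profiles and finish by showing the implied correction factor $\mathbb{E}_O[h]/g$ takes the value $\theta_{u \ra v}$ on a $(1,0)$ pair but $\theta_{u \ra v}(1+2\theta_{u \la v})/3$ on a $(1,1)$ pair, which forces two incompatible proportionality constants unless $\theta_{u \la v}=1$. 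Your finishing step is the more rigorous one: it actually rules out every rescaling and supplies a concrete instance witnessing non-proportionality of the full weighted sums, which the paper leaves implicit. Your case values are also exactly right and sidestep a small slip in the paper's displayed computation, where $\mathbb{E}[h]$ is equated with $\mathbb{E}[2^{Y_{u \ra v}}\cdot 2^{Y_{u \la v}}]$ without the additive $-1$ (harmless for the non-proportionality conclusion, but your $\mathbb{E}_O[h]=\theta_{u \ra v}(1+2\theta_{u \la v})$ is the correct quantity).
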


\begin{proof}
It is sufficient to show that
\begin{align*}
\mathbb{E}_{O_{u \ra v}, O_{u \la v}}\left[h(Y_{u \ra v}, Y_{u \la v})\right] \not\propto g(R_{u \ra v}, R_{u \la v}).
\end{align*}
Since $O_{u \ra v}, O_{u \la v}$ are binary random variables, it holds that
\begin{align}
2^{Y_{u \ra v}} &= 2^{O_{u \ra v} R_{u \ra v}} \nonumber \\
&= (1 - O_{u \ra v}) + O_{u \ra v} 2^{R_{u \ra v}}, \label{eq:proactive-exp-gain} \\
2^{Y_{u \la v}} &= 2^{O_{u \ra v} O_{u \la v} R_{u \ra v} R_{u \la v}} \nonumber \\
&= (1 - O_{u \ra v}O_{u \la v}) + O_{u \ra v}O_{u \la v} 2^{R_{u \ra v}R_{u \la v}}.
\end{align}
Then, we have
\begin{align}
2^{Y_{u \ra v}} \cdot 2^{Y_{u \la v}} &= (1 - O_{u \ra v}) + O_{u \ra v}(1 - O_{u \la v})2^{R_{u \ra v}}\\
& \quad + O_{u \ra v}O_{u \la v}2^{R_{u \ra v}(1 + R_{u \la v})}. \label{eq:reactive-exp-gain}
\end{align}
Since $O_{u \ra v}, O_{u \la v}$ are independent and $\theta_{u\ra v}, \theta_{u\la v} \neq 1$, we obtain
\begin{align}
&\mathbb{E}_{O_{u \ra v}, O_{u \la v}}\left[ h(Y_{u \ra v}, Y_{u \la v})\right] \notag \\
&\quad = \mathbb{E}_{O_{u \ra v}, O_{u \la v}}\left[ 2^{Y_{u \ra v}} \cdot 2^{Y_{u \la v}} \right] \notag \\
&\quad = (1 - \theta_{u \ra v}) + \theta_{u \ra v}\left(1 - \theta_{u \la v}\right)2^{R_{u \ra v}}\notag \\ 
&\quad \quad + \theta_{u \ra v} \theta_{u \la v} \underbrace{2^{R_{u \ra v}(1 + R_{u \la v})}}_{g(R_{u \ra v}, ~R_{u \la v}) + 1}, \label{eq:reactive-exp-gain} \\
&\quad \not\propto   g(R_{u \ra v}, R_{u \la v}),\notag
\end{align}
which completes the proof.
\end{proof}

\subsection{Unbiased Ranking Metric Estimator}
We have seen that the naive estimator~\eqref{metric:naive_estimator} can be biased in estimating the ground-truth ranking metric with implicit feedback.  
To remedy this issue, we extend the IPW estimator \citep{Joachims2006} and propose a new estimator, called \emph{two-sided IPW estimator}, for the ranking metric~\eqref{metric:ground_truth}. 

To take into account the position bias in the two-sided matching platform, we first introduce the following weighted gain function: 
\begin{align}
h_{\mathrm{IPW}}(Y_{u \ra v}, Y_{u \la v}) &= \frac{1}{\theta_{u \ra v} \theta_{u \la v}}\left(2^{Y_{u \ra v}}\left(2^{Y_{u \la v}} - 1\right)\right) \nonumber \\
&\quad + \frac{1}{\theta_{u \ra v}} \left(2^{Y_{u \ra v}} - 1\right). \nonumber
\end{align}
With $h_{\mathrm{IPW}}(Y_{u \ra v}, Y_{u \la v})$, we define the \emph{two-sided IPW estimator} as follows:
\begin{equation}\label{metric:ipw_estimator}
\begin{split}
&\hat{\mathcal{R}}_{\mathrm{IPW}}(Y_{u \ra v}, Y_{u \la v})  \\  
&\quad \coloneq \frac{1}{\left| U \right|}\sum_{u \in U} \sum_{v \in V_u} \lambda(\rank_{u}(v)) \cdot h_{\mathrm{IPW}}(Y_{u \ra v}, Y_{u \la v}).    
\end{split}
\end{equation}
We show that $\hat{\mathcal{R}}_{\mathrm{IPW}}$ is an unbiased estimator of the ground-truth metric~\eqref{metric:ground_truth}. 
\begin{theorem}
Suppose that $O_{u \ra v}$ and $O_{u \la v}$ are independent of each other. 
Then it follows that
\begin{equation}\label{eq:unbias}
\begin{split}
     &\mathbb E_{O_{u\ra v}, O_{u\la v}}\left[\hat{\mathcal{R}}_{\mathrm{IPW}}(Y_{u \ra v}, Y_{u \la v})\right] \notag\\
     &\quad = \mathcal{R}\left(\rank_{u}(v), R_{u\ra v}, R_{u \la v}\right). 
\end{split}    
\end{equation}
\end{theorem}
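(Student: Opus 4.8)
The plan is to reduce the theorem to a pointwise identity in the same way the naive-estimator proposition was reduced: since $\mathcal{R}$ and $\hat{\mathcal{R}}_{\mathrm{IPW}}$ are both of the form $\frac{1}{|U|}\sum_{u}\sum_{v\in V_u}\lambda(\rank_u(v))\cdot(\text{gain})$ and $\lambda(\rank_u(v))$ is deterministic given the logged ranking, by linearity of expectation it suffices to show
\begin{align*}
\mathbb{E}_{O_{u\ra v},O_{u\la v}}\left[h_{\mathrm{IPW}}(Y_{u\ra v},Y_{u\la v})\right] = g(R_{u\ra v},R_{u\la v}) = 2^{R_{u\ra v}(1+R_{u\la v})}-1
\end{align*}
for every fixed pair $(u,v)$.

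First I would rewrite $h_{\mathrm{IPW}}$ in the ``exploded'' Bernoulli form already used in the proposition. Using $2^{Y_{u\ra v}} = (1-O_{u\ra v}) + O_{u\ra v}2^{R_{u\ra v}}$ and the expansion of $2^{Y_{u\ra v}}\cdot 2^{Y_{u\la v}}$ from \Cref{eq:reactive-exp-gain}, I would express the two bracketed pieces of $h_{\mathrm{IPW}}$ directly in terms of $O_{u\ra v}$, $O_{u\la v}$, $R_{u\ra v}$, $R_{u\la v}$. Concretely, $2^{Y_{u\ra v}}-1 = O_{u\ra v}(2^{R_{u\ra v}}-1)$, and $2^{Y_{u\ra v}}(2^{Y_{u\la v}}-1) = O_{u\ra v}O_{u\la v}\bigl(2^{R_{u\ra v}(1+R_{u\la v})} - 2^{R_{u\ra v}}\bigr)$ — note the $(1-O_{u\ra v})$ and the $O_{u\ra v}(1-O_{u\la v})2^{R_{u\ra v}}$ terms in \Cref{eq:reactive-exp-gain} are exactly the ``$-1$'' and the middle term, so the difference collapses cleanly. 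Then I would take expectations termwise using Assumption~2 (so $\theta_{u\ra v},\theta_{u\la v}>0$, making the division legitimate) and the stated independence of $O_{u\ra v}$ and $O_{u\la v}$: $\mathbb{E}[O_{u\ra v}]=\theta_{u\ra v}$, $\mathbb{E}[O_{u\ra v}O_{u\la v}]=\theta_{u\ra v}\theta_{u\la v}$.

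Putting the pieces together, the $\frac{1}{\theta_{u\ra v}\theta_{u\la v}}$ factor cancels the $\theta_{u\ra v}\theta_{u\la v}$ from the first expectation to leave $2^{R_{u\ra v}(1+R_{u\la v})} - 2^{R_{u\ra v}}$, and the $\frac{1}{\theta_{u\ra v}}$ factor cancels the $\theta_{u\ra v}$ from the second to leave $2^{R_{u\ra v}}-1$; summing gives $2^{R_{u\ra v}(1+R_{u\la v})}-1 = g(R_{u\ra v},R_{u\la v})$, as required. Substituting back into the double sum and dividing by $|U|$ reproduces $\mathcal{R}(\rank_u(v),R_{u\ra v},R_{u\la v})$ exactly.

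I expect the only real subtlety to be the telescoping step: getting the algebra of $h_{\mathrm{IPW}}$ to line up so that the unwanted $2^{R_{u\ra v}}$ cross-terms from the two summands cancel against each other, which is precisely the reason $h_{\mathrm{IPW}}$ was defined with that particular combination of a $\frac{1}{\theta_{u\ra v}\theta_{u\la v}}$-weighted piece and a $\frac{1}{\theta_{u\ra v}}$-weighted piece rather than a single IPW term. Everything else — linearity of expectation, pulling the deterministic $\lambda(\rank_u(v))$ out, and the independence-based moment computations — is routine, and Assumption~1 is what guarantees $\theta_{u\ra v}$ is a well-defined function of position so that the weights in the estimator match the true exposure probabilities.
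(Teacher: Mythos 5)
Your proposal is correct and follows essentially the same route as the paper's proof: reduce to the pointwise identity $\mathbb{E}[h_{\mathrm{IPW}}] = g(R_{u\ra v},R_{u\la v})$ via linearity, expand $2^{Y_{u\ra v}}$ and $2^{Y_{u\ra v}}2^{Y_{u\la v}}$ as affine functions of the Bernoulli exposure indicators, and use independence so the $\theta$ weights cancel. The only (cosmetic) difference is that you factor the two summands into the clean forms $O_{u\ra v}(2^{R_{u\ra v}}-1)$ and $O_{u\ra v}O_{u\la v}\bigl(2^{R_{u\ra v}(1+R_{u\la v})}-2^{R_{u\ra v}}\bigr)$ before taking expectations, whereas the paper takes expectations of $2^{Y_{u\ra v}}$ and $2^{Y_{u\ra v}}2^{Y_{u\la v}}$ first and cancels the $(1-\theta)$ terms afterwards.
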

\begin{proof}
From \Cref{eq:proactive-exp-gain}, the expectation of $2^{Y_{u \ra v}}$ can be calculated as
\begin{align*}
\mathbb{E}\left[ 2^{Y_{u \ra v}} \right] &= \mathbb{E}\left[ (1 - O_{u \ra v}) + O_{u \ra v} 2^{R_{u \ra v}} \right] \\
&= (1 - \theta_{u \ra v}) + \theta_{u \ra v} 2^{R_{u \ra v}}.
\end{align*}
Combined with \Cref{eq:reactive-exp-gain},  the expectation of $h_{\mathrm{IPW}}\left(Y_{u \ra v}, Y_{u \la v}\right)$ is calculated as
\begin{align}
&\mathbb{E}_{O_{u \ra v}, O_{u \la v}}\left[ h_{\mathrm{IPW}}\left(Y_{u \ra v}, Y_{u \la v}\right) \right] \notag\\
&~= \mathbb{E}_{O_{u \ra v}, O_{u \la v}}\left[ \frac{1}{\theta_{u \ra v}\theta_{u \la v}} 2^{Y_{u \ra v}} 2^{Y_{u \la v}}\right.\\
&~\quad + \left.\frac{1}{\theta_{u \ra v}}\left(1 - \frac{1}{\theta_{u \la v}}\right) 2^{Y_{u \ra v}} - \frac{1}{\theta_{u \ra v}} \right] \notag\\
&~= \frac{1}{\theta_{u \ra v}\theta_{u \la v}}  \mathbb{E}_{O_{u \ra v}, O_{u \la v}}\left[  2^{Y_{u \ra v}}2^{Y_{u \la v}} \right] \notag\\
&~\quad + \frac{1}{\theta_{u \ra v}}\left(1 - \frac{1}{\theta_{u \la v}}\right) \mathbb{E}_{O_{u \ra v}, O_{u \la v}}\left[  2^{Y_{u \ra v}} \right] - \frac{1}{\theta_{u \ra v}} \notag\\
&~= \frac{1}{\theta_{u \ra v}\theta_{u \la v}} \biggl( (1 - \theta_{u \ra v}) + \theta_{u \ra v}\left(1 - \theta_{u \la v}\right)2^{R_{u \ra v}}  \notag\\ 
&~\quad  + \theta_{u \ra v} \theta_{u \la v} \left(g\left(R_{u \ra v}, ~R_{u \la v}\right) + 1\right) \biggr)\notag \\
&~\quad +  \frac{1}{\theta_{u \ra v}}\left(1 - \frac{1}{\theta_{u \la v}}\right) \left( (1 - \theta_{u \ra v}) + \theta_{u \ra v} 2^{R_{u \ra v}} \right) - \frac{1}{\theta_{u \ra v}}  \notag\\
&~= g\left(R_{u \ra v}, ~R_{u \la v}\right) \label{eq:equaltog}.
\end{align}
From \Cref{metric:ground_truth,metric:ipw_estimator,eq:equaltog}, we obtain the desired result. 
\end{proof}

\section{Experiment}
\subsection{Experimental Setup}
\subsubsection{Data Preparation}
We conducted numerical experiments on the networking recommendation data set collected by \citet{su2022opt_rank}.
The data set provides a matrix expressing the bidirectional preferences within the range $[0, 1]$ of 925 users. 
We denote this matrix as $M \in [0, 1]^{925 \times 925}$ , where $M_{uv}$ represents the preference from $u$ to $v$. 
Then we preprocessed the data as follows.
\begin{enumerate}
    \item To simulate a two-sided market, we randomly divided the user group into a proactive side $U$ and a reactive side $V$. 
    \item For the purpose of cross-validation, we split each $U$ and $V$ into 5-fold, which is denoted by $U_1, \cdots, U_5$ and $V_1 \cdots, V_5$.
    \item We picked the $k$-th fold elements of $\left\{M_{ij}, i \in U_k, j \in V_k \right\}$ as the test data, and split the rest elements as train and validation data set.
    \item We defined the exposure probabilities $\theta_{u \ra v}, \theta_{u \la v}$ as follows:
        \begin{align*}
            &\theta_{\cdot \ra v} =\left(\frac{ \sum_{k \in U} M_{kv}}{ \underset{v' \in V}{\max} \sum_{k \in U} M_{kv'}}\right)^\eta, \\
            &\theta_{u \la \cdot} =\left(\frac{ \sum_{k \in V} M_{ku}}{ \underset{u' \in U}{\max} \sum_{k \in V} M_{ku'}}\right)^\eta,
        \end{align*}
        where $\eta$ is a hyperparameter. 
        Here, the exposure probability for each user is determined by the proportion of interactions of the user compared to the one with the most interactions.
    \item We randomly generated feedback variables as follows:
        \begin{align*}
            &R_{u \ra v} \sim \bern(M_{u \ra v}), O_{u \ra v} \sim \bern(\theta_{u \ra v}), \\
            &R_{u \la v} \sim \bern(M_{u \la v}), O_{u \la v} \sim \bern(\theta_{u \la v}), \\
            & Y_{u \ra v} = O_{u \ra v} \cdot R_{u \ra v}, \\
            & Y_{u \la v} = Y_{u \ra v} \cdot O_{u \la v} \cdot R_{u \la v}.
        \end{align*}
\end{enumerate}

\begin{table*}
\caption{Methods comparison}
\label{table:methods-comparison}
\centering
    \begin{tabular}{ccc} \hline
       Methods & Training Loss & Validation Metric \\ \hline
       Conventional & $l$ & $\sum_{v \in V_u} \lambda(\rank_{u}(v)) \cdot \left(2^{Y_{u \ra v} + Y_{u \la v}} - 1\right)$ \\
       IPW1 & $l_{\mathrm{IPW1}}$ &  $\sum_{v \in V_u} \lambda(\rank_{u}(v)) \cdot \frac{\left(2^{Y_{u \ra v} + Y_{u \la v}} - 1\right)}{\theta_{u \ra v}}$ \\
       IPW2 (proposed) & $l_{\mathrm{IPW2}}$ & $\sum_{v \in V_u} \lambda(\rank_{u}(v)) \cdot h_{\mathrm{IPW}}(Y_{u \ra v}, Y_{u \la v})$ \\
       \bottomrule
    \end{tabular}
\end{table*}

\begin{table*}[ht]
\caption{Test score of all models for each fold. For each fold and setting, the best and worst results are denoted in bold and underlined, respectively.}
\label{table:result1}
\centering
    \begin{tabular}{rrrr|rrr|rrr|rrrr} \hline
       DCG & \multicolumn{3}{c}{@3} & \multicolumn{3}{c}{@10} & \multicolumn{3}{c}{@20} & \multicolumn{3}{c}{@30}  \\ 
       fold & baseline & IPW1 & IPW2 & baseline & IPW1 & IPW2 & baseline & IPW1 & IPW2 & baseline & IPW1 & IPW2 \\ \hline
       1 & 0.9270 & \underline{0.7900} & \bf{0.9790} & 2.3565 & \underline{1.8320} & \bf{2.3729} & 3.5849 & \underline{2.7561} & \bf{3.6886} & 5.1954 & \underline{4.8274} & \bf{5.2256} \\
       2 & 0.6972 & \underline{0.6705} & \bf{0.8885} & \underline{1.8785} & 1.9064 & \bf{2.0376} & \underline{3.0666} & 3.1751 & \bf{3.3240} & \underline{4.9080} & 4.9642 & \bf{5.0639} \\ 
       3 & \underline{0.6144} & 0.6587 & \bf{0.8769} & 1.6822 & \underline{1.6650} & \bf{1.9998} & \underline{2.7682} & 2.7702 & \bf{2.9078} & \underline{4.3774} & 4.3975 & \bf{4.5385} \\ 
       4 & \bf{1.2125} & 1.0749 & \underline{1.0014} & 2.7112 & \bf{2.7789} & \underline{2.2095} & \bf{4.1889} & 4.1606 & \underline{3.3233} & \bf{6.4263} & 6.4055 & \underline{6.0450} \\ 
       5 & \bf{0.9372} & \underline{0.8957} & 0.9127 & 2.3912 & \bf{2.4353} & \underline{2.3590} & 3.6510 & \bf{3.6809} & \underline{3.6798} & 5.5462 & \bf{5.5516} & \underline{5.5205} \\ 
       \hline
    \end{tabular}
\end{table*}

\begin{table*}[ht]
\caption{Test score for each hyperparameter $\eta$. The results are highlighted in the same way as \Cref{table:result1}.}
\label{table:result2}
\centering
    \begin{tabular}{rrrr|rrr|rrr|rrrr} \hline
       DCG & \multicolumn{3}{c}{@3} & \multicolumn{3}{c}{@10} & \multicolumn{3}{c}{@20} & \multicolumn{3}{c}{@30} \\
       $\eta$ & baseline & IPW1 & IPW2 & baseline & IPW1 & IPW2 & baseline & IPW1 & IPW2 & baseline & IPW1 & IPW2 \\ \hline
       0.5 & 0.9270 & \underline{0.7900} & \bf{0.9790} & 2.3565 & \underline{1.8320} & \bf{2.3729} & 3.5849 & \underline{2.7561} & \bf{3.6886} & 5.1954 & \underline{4.8274} & \bf{5.2256} \\
       0.6 & 1.0835 & \bf{1.1663} & \underline{0.9171} & 2.4787 & \bf{2.5708} & \underline{2.0815} & 3.7181 & \bf{3.8070} & \underline{3.2703} & 5.3004 & \bf{5.3813} & \underline{5.0614} \\
       0.8 & 1.0004 & \underline{0.8008} & \bf{1.0619} & 2.2892 & \underline{1.7519} & \bf{2.4417} & 3.4547 & \underline{2.6889} & \bf{3.7024} & 5.1839 & \underline{4.8023} & \bf{5.2809} \\ 
       1.0 & 0.9048 & \underline{0.8916} & \bf{1.0209} & 2.2697 & \underline{1.9884} & \bf{2.4283} & 3.5770 & \underline{2.9147} & \bf{3.6907} & 5.1688 & \underline{4.9179} & \bf{5.2619} \\
       \bottomrule
    \end{tabular}
\end{table*}
\subsubsection{Methods}
We used Matrix Factorization-based models to predict mutual preferences. To represent each slide of preferences, we prepared two kinds of embedding space $W_{pro}$ and $W_{rea}$. 
For a pair $(u, v), u \in U, v \in V$, we define a mutual preference score $s_{u v}$ as follows:
\begin{enumerate}
    \item $s_{u \ra v} = \sigma(w_u^\top w_v)$, where $w_u, w_v \in W_{pro}$,
    \item $s_{u \la v} = \sigma(\tilde{w}_u^\top \tilde{w}_v) $, where $\tilde{w}_u, \tilde{w}_v \in W_{rea}$,
    \item $s_{uv} = s_{u \ra v} \cdot s_{u \la v}$,
\end{enumerate}
where $\sigma(\cdot)$ is the sigmoid function.
As an intuitive explanation, $s_{u \ra v}$ represents the preference score from $u$ to $v$, and $s_{u \la v}$ represents the preference score from $v$ to $u$.
In conducting the experiment, we fixed the dimension of the embedding space at 64.

We formulate the training loss related to a certain user $u$'s instances as
$$
l = -\sum_{v \in V_{u}} y_{u \ra v} \log \frac{s_{u \ra v}}{\sum_{v' \in V_{u}} s_{u \ra v'}} -\sum_{v \in V_{u}} y_{u \la v} \log \frac{s_{u \la v}}{\sum_{v' \in V_{u}} s_{u \la v'}},
$$
where the first term corresponds to the learning of the proactive side's preferences, and the second term corresponds to the reactive side's.
Also, two IPW versions of the list-wise loss (one for the conventional IPW loss and the other for two-sided IPW) can be expressed as follows:

\begin{align*}
    l_{\mathrm{IPW1}} &= -\sum_{v} \frac{y_{u \ra v}}{\theta_{u \ra v}} \log \frac{s_{u \ra v}}{\sum_{v'} s_{u \ra v'}} -\sum_{v} \frac{y_{u \la v}}{\theta_{u \ra v}} \log \frac{s_{u \la v}}{\sum_{v'} s_{u \la v'}}, \\
    l_{\mathrm{IPW2}} &= -\sum_{v} \frac{y_{u \ra v}}{\theta_{u \ra v}} \log \frac{s_{u \ra v}}{\sum_{v'} s_{u \ra v'}} -\sum_{v} \frac{y_{u \la v}}{\theta_{u \ra v}\theta_{u \la v}} \log \frac{s_{u \la v}}{\sum_{v'} s_{u \la v'}}. \\
\end{align*}

During the training phase, we saved the best model with the best corresponding ranking metric best for evaluating the test data. 
\Cref{table:methods-comparison} summarises the training loss and validation metric used in each method.

For simplicity, we used the propensity score $\theta_{u \ra v}, \theta_{u \la v}$ that was employed for the data preparation phase in the training and validation phases as well.

\subsubsection{Evaluation}
Since we are able to access the ground-truth relevance of each pair, we evaluated the models by the discounted cumulative gain (DCG) using the true relevance and the ranking deduced by the score $s_{uv}$. 
For a given positive integer $K$, the DCG score for the user $u$ can be calculated as 

$$
DCG@K(u) = \sum_{i=1}^{K} \frac{2^{R_{u \ra v_i} (1 + R_{u \la v_i})}}{\log_2\left(i + 1\right)},
$$
where $v_i$ is the $i$-th element in the ranking induced by the score.

\subsection{Results \& Discussions}
We first examined each method's test data on each fold.
Here we fix the hyperparameter $\eta=0.5$ and evaluate the $DCG@K$ for varying $K \in \left\{3, 10, 20, 30\right\}$.

The results are shown in Table \ref{table:result1}. From the table, it is verified that in more than half of the cases, the proposed method gives the best performance among the three methods. 

Next, we change the hyperparameter $\eta \in \left\{0.5, 0.6, 0.8, 1.0 \right\}$ and evaluate the test score on the fold-1 data.
The results are shown in Table \ref{table:result2}.
Table \ref{table:result2} shows that the proposed method was relatively robust to the data generation process.

It is worth noticing that IPW1 gave the worst test scores in many cases.
It appears that in IPW1, the emphasis placed on learning from the proactive side's feedback may have inadvertently resulted in a deterioration in the learning from the reactive side's feedback.


\section{Conclusion}
Learning-to-rank (LTR) algorithms have become essential in modern recommendation systems. 
However, achieving \emph{unbiased} LTR is a significant challenge, particularly when using implicit feedback like clicks, which can be influenced by factors other than user preferences. 
Existing unbiased LTR techniques mainly focus on single-sided markets, which overlook two-sided markets where mutual preference matching determines final conversion. 
In such two-sided market platforms, position biases can emerge from both user sides.
To address such position biases, we proposed a specialized unbiased LTR framework for two-sided matching platforms. 
We introduced the two-sided inverse probability weighting estimator and showed its unbiasedness in two-sided matching platforms. 
Our experiments demonstrated that our method outperformed  existing unbiased LTR methods in terms of prediction accuracy and robustness.

%
\bibliographystyle{ACM-Reference-Format}
\bibliography{sample-base}



\end{document}